\newcommand{\RemoveThis}{\ifthenelse{1< 0}}
\newcommand{\qed}{\hfill $\square$}
\newtheorem{lemma}{\textbf{Lemma}}
\newtheorem{theorem}{\textbf{Theorem}}
\newtheorem{remark}{\textbf{Remark}}
\begin{document}

\title{On Threshold Routing in a Service System with Highest-Bidder-First and FIFO Services}
\author{
  \begin{tabular}{ccc}
    Tejas Bodas  & D.~Manjunath \\
    IIT Bombay, INDIA & IIT Bombay, INDIA 
  \end{tabular}
}

\maketitle

\begin{abstract}
In this paper, we consider a two server system serving heterogeneous customers. One of the server has a FIFO scheduling policy and charges a fixed admission price to each customer. The second queue follows the highest-bidder-first (HBF) policy where an arriving customer bids for its position in the queue. Customers make an individually optimal choice of the server and for such system, we characterize the equilibrium routing of customers. We specifically show that this routing is characterized by two thresholds.
\end{abstract}

\section{Introduction}
\label{sec:intro}

Consider two make-to-order firms manufacturing an identical product.
Upon receiving an order, the firms must assemble the product and 
deliver it to the customer ordering it. Each firm can 
assemble only one quantity at a time and the time taken to assemble the 
product need not be deterministic.
During an assembly of a product, if there are more orders being placed 
by other customers, then these orders have to be fulfilled by the 
firm by suitably scheduling the subsequent orders. The two firms differ
in their pricing strategy and the scheduling policy for choosing subsequent orders.
One of the firm charges a fixed admission price for the product and 
maintains a FIFO scheduling discipline. The second firm employs a bidding
policy where subsequent customers place a bid and their queue position
in the schedule in proportional to the bid placed.
The customers ordering the product differ in their cost for unit delay
and are thus sensitive to the delay in receiving the product. 
When placing an order, the customer does 
not know the number of pending orders but may be informed about 
the service rate and the arrival rate for the orders. When 
ordering the product, the customers have to decide which firm to choose
and if they choose the bidding firm, then what is the optimal bid to 
be made such that the cost of obtaining the product (the sum of the 
monetary and the delay cost) is minimized. Motivated by this problem, 
our interest is to characterize the equilibrium choice of the firm
made by the heterogeneous customers. 

Applicable to more general setting, a formal description of the problem
considered in this paper is as follows. Consider a two server service 
system with customers arriving according to a homogeneous Poisson 
process. The customers are heterogeneous in their cost for 
unit delay. The service system
consists of two servers and the customers are required to obtain service at 
one of these two servers. There is no dispatcher available to route
the customers and hence each arriving customer has to make an individually
optimal queue join decision. Each server in the service system has an 
associated queue and the two queues differ from each other in their 
scheduling policy. One of the queue has the standard FIFO scheduling 
policy and to monetize the offered service, it charges a fixed admission price
to its customers. The other queue has a non-preemptive priority scheduling
discipline where after the current service completion, a customer with the
highest priority level is next chosen for service from the pool of customers
waiting in the queue. The priority of a customer in this queue is determined by
the bid paid by each arriving customer. Naturally, a higher bid corresponds to a higher 
priority in the queue. Such a scheduling policy is also known as the 
highest-bidder-first (HBF) policy and was introduced by 
Kleinrock \cite{Kleinrock67}. In this paper, our primary interest is to 
characterize the equilibrium routing satisfying the Wardrop conditions 
\cite{Wardrop52} and determine the bidding decision made by those customers
choosing the HBF server.

Such a system with parallel HBF and FIFO services was first analyzed in
\cite{Bodas14b}. To investigate the effect on the revenue from an HBF 
server, a free FIFO service was introduced in the system. Further
a minimum bid was made mandatory for those choosing the HBF server.
For such a system, the equilibrium routing and bidding 
strategy was first analyzed in \cite{Bodas14b}. It was shown that the Wardrop equilibrium 
routing is characterized by a single threshold and customers with 
delay sensitivity (cost per unit delay) above the threshold 
choose the HBF server while the rest choose the FIFO server. Two scenarios
were considered for modeling the system; in the first scenario, a free
FIFO server was added in parallel to an existing HBF server. In the second 
scenario, the total service capacity was shared between the HBF and the 
FIFO server. Assuming that the customers cannot balk, it was shown that 
an addition of a free FIFO server decreases the system revenue.
On the contrary, with the help of numerical examples, it was conjectured 
that sharing capacity with a FIFO server improves the revenue from
the HBF server. 

The primary difference between the system model considered in this paper 
and that of \cite{Bodas14b} is as follows.
We assume that the FIFO server is not free but in fact comes with
an admission price. This assumption makes the model more naturally 
applicable to a variety of revenue based service systems such as  
the above example for make-to-order firms. We relax the
assumption of a minimum bid and analyze the equilibrium routing 
and bidding rule for this problem. As in \cite{Bodas14b}, the equilibrium bids 
in the HBF server (by those choosing HBF under equilibrium routing)
can be determined from the analysis in \cite{Glazer86,Lui85}. 
We begin by analyzing whether a single threshold 
routing function as in \cite{Bodas14b} satisfies the equilibrium 
routing conditions. To our surprise, this is not the case. 
We then check for the threshold routing where customers with 
sensitivity above a threshold choose the FIFO server while the 
rest choose the HBF server. We show that such a candidate for equilibrium 
routing also does not satisfy the necessary conditions for 
Wardrop equilibrium. In our main result, we prove that the
(Wardrop) equilibrium routing is of the following type:
there exists two threshold and customers with sensitivity 
between the two thresholds choose the FIFO server
while the rest choose the HBF server. To the best of 
our knowledge, the result is novel and has a useful insight.
While the `middle class' of the population (based on their sensitivities) 
choose the FIFO service, the remaining customers (specifically
those with high and low delay sensitivity) choose the HBF server.

The rest of the paper is organized as follows. In the 
next section, we shall introduce the notation and 
recall some preliminary results from \cite{Bodas14b}.
We prove our main results in Section \ref{sec:main}
and this is followed by a discussion summarizing the
results.

\section{Notation and Preliminaries}
\label{sec:prelims}
In this section, we shall introduce the basic notation 
and recall some preliminary results from \cite{Bodas14b}.
We assume as in \cite{Bodas14b} that the customers arrive to
the service system according to a homogeneous
Poisson process of rate $\lambda.$ Service times of customers are
i.i.d. random variables with distribution $G(\cdot)$ and unit mean.
Each arriving customer has an associated parameter $\beta$ 
which is a realization of the random variable $\bm{\beta},$ $0
\leq a \leq \bm{\beta} \leq b < \infty.$ $\bm{\beta}$ represents a customers
cost per unit delay. Let $F(\beta)$ denote the distribution of  
$\bm{\beta}$ which is assumed to be absolutely continuous in $(a,b).$ 
We also call $\bm{\beta}$ as the type of the customer and call
$F(\beta)$ as the type profile.

The first server uses the non preemptive HBF discipline and serves 
at rate $\mu_1.$
The second server uses the FIFO discipline and serves at rate $\mu_2.$
Customers choosing the HBF server will have to place a bid before 
joining its queue while those choosing the FIFO server have to pay a 
fixed admission price denoted by $c$. We will assume that all arrivals
will have to
receive service from one of the two servers and they cannot balk. Thus
an arriving customer now has to make the following decisions on
arrival; which server to use, and, if it chooses the HBF server, then
the value of its bid. As in \cite{Bodas14b}, we assume oblivious 
decisions and let $p(\beta): [a,b] \to [0,1]$ denote the probability
that a customer of type $\beta$ chooses the FIFO server. Further, let
$X(\beta)$ be the equilibrium bid if such a customer chooses the HBF
server. For a preliminary analysis of the HBF queue, refer 
\cite{Kleinrock67}. Lui \cite{Lui85} and Glazer and Hassin
\cite{Glazer86} were the first to consider 
the case with heterogeneous customers (characterized by $\bm{\beta}$)
and have determined the equilibrium bidding function $X(\beta)$. 
The function $X(\beta)$ determines the optimal value of that bid to
be made by a customer of type $\beta$ such that the sum of the bid 
and the expected waiting cost in the queue is minimized. Specifically, 
 it was shown that $X(\beta)$ is given by 
\begin{equation}
  \label{eq:bid_fx}
  X(\beta) = \int_0^\beta \frac{2 \rho W_0 y}{\left( 1-\rho+\rho
      F(y)\right)^3}\ dF(y)
\end{equation}
 where $\rho$ denotes the traffic intensity, $F(\cdot)$ denotes the 
 underlying distribution of $\bm{\beta}$ and $W_0$ denotes the expected
 waiting time in the HBF server added to that of an arriving customer
 due to the residual service time of an existing customer. This is given
 by, 
\begin{displaymath}
  W_0 = \frac{\lambda}{2} \int_0^\infty \tau^2 dG(\mu \tau)
\end{displaymath}
where $\lambda$ and $\mu$ denote the arrival rate of customers 
and the service rate of the HBF server respectively. It was further shown 
that for a customer of type $\beta,$ its expected 
waiting time $W(\beta)$ is given by  

\begin{equation}
\label{eq:Wbeta}
W(\beta) = \frac{\mu^2 W_0}{(\mu - \lambda(1-F(\beta)))^2}. 
\end{equation}
%
%
%

Now for a given $p(\beta),$ it is easy to 
see that the arrival rate to the FIFO server is $\lambda_2 := \lambda
\int_{0}^{\infty} p(\beta) dF(\beta)$ while the arrival rate to the HBF
server is $\lambda_1:= \lambda - \lambda_2$. Let $\rho_i := \lambda_i/\mu_i.$ 
A customer of type $\beta$ that chooses the HBF server experiences a bid-dependent 
waiting time that will be denoted by $W_1(\beta).$ 
%
The customers choosing
the FIFO server experience an expected waiting time denoted by
$W_2(\lambda_2).$ Continuing with the notation of \cite{Bodas14b}, let 
$D_1(\beta):= W_1(\beta) + \frac{1}{\mu_1}$ and $D_2 := W_2(\lambda_2) +
\frac{1}{\mu_2}$ be the expected sojourn times in, respectively, the
HBF and the FIFO servers. Refer Fig.~\ref{fig:2queue} for an 
illustration of the system model.
\begin{figure}
  \begin{center}
    \includegraphics[height=2.8in]{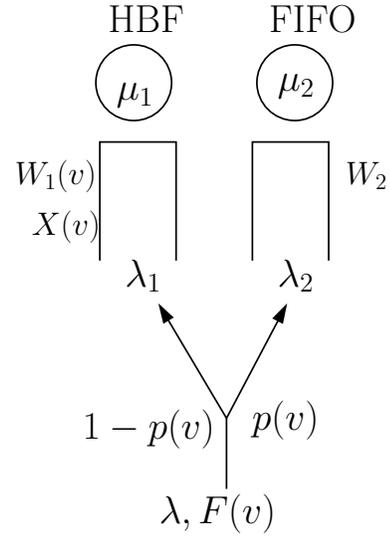}
  \end{center}
  \caption{System with a HBF server and a FIFO server.}
  \label{fig:2queue}
\end{figure}

In this paper, the primary interest is to obtain the equilibrium
strategy henceforth denoted by $(p^E(\beta), X^E(\beta))$. Note that
$X^E(\beta)$ needs to be determined for only those customers that
under equilibrium decide to join the HBF queue. Clearly, the 
system considered is non-atomic and all customers choose individually
optimal strategies. The equilibrium attained is a Wardrop equilibrium
that was first described in \cite{Wardrop52} and used extensively
in transportation systems. The Wardrop equilibrium routing condition
on $p^E(\beta)$ for all $\beta$ is that 
\begin{equation}
  \label{eq:wardrop}
  \begin{array}{lll}
     p^E(\beta) \geq 0 & \mbox{implies that} & c + \beta D_2 \leq X^E(\beta) + 
    \beta D_1(\beta).
    \end{array}
 \end{equation}
Further $0 < p^E(\beta) < 1$ implies $c + \beta D_2 = X^E(\beta) + 
    \beta D_1(\beta).$

%
%

The following theorem recalls the equilibrium strategy $(p^E(\beta), X^E(\beta))$
for the system model considered in \cite{Bodas14b}. The key difference between 
the models is that the FIFO server in \cite{Bodas14b} charges no admission price
and a customer joining the HBF server is required to pay a minimum bid $M.$  
 
\begin{theorem}
  \label{thm:threshold-equilibrium} 
  Using $\beta_1$ determined below, define $p^E(\beta),$ $F_1(\beta),$ and $W_0$
  as follows.
  \begin{eqnarray}
    p^E(\beta) & = & \begin{cases} 
      0 & \mbox{ for } \beta > \beta_1,\\
      t & \mbox{ for } \beta = \beta_1,\\
      1 & \mbox{ for } \beta < \beta_1.
    \end{cases} 
    \label{eq:P_E}
    \\
    F_1(\beta) & := & 
    \begin{cases} 
      0 & \beta \ < \beta_1\\ 
      \frac{\int_{\beta_1}^\beta dF(x) }{\int_{\beta_1}^b dF(x) }
      & \beta_1 \ \leq \beta \ \leq b, \\ 1 & \beta > b, 
    \end{cases} \label{eq:F_1}
    \\
    W_0 & = & \frac{\lambda_1}{2} \int_0^\infty \tau^2 dG(\mu_1 \tau), 
    \label{eq:W_0} \\
    X^E(\beta) & = & \int_0^\beta \frac{2 \rho_1 W_0 y}{\left( 1 - \rho_1 +
        \rho_1 F_1(y)\right)^3}\ dF_1(y)
    \label{eq:X_E}
  \end{eqnarray}
  For the routing and bidding policy $(p^{E}(\beta), X^{E}(\beta)),$
 $\beta_1$ is determined as follows.
  \begin{itemize}
  \item If using $\beta_1=a$ in \eqref{eq:P_E}--\eqref{eq:X_E} satisfies
    $M + a D_1(a) < a D_2,$ then set $\beta_1=a.$
  \item Else if using $\beta_1=b$ in \eqref{eq:P_E}--\eqref{eq:X_E}
    satisfies $M + b D_1(b) > b D_2$ then set $\beta_1=b.$
  \item Else find $\beta_1 \in (a,b)$ which when used in
    \eqref{eq:P_E}--\eqref{eq:X_E} satisfies 
    \begin{equation}
      M + \beta_1 D_1(\beta_1) = \beta_1 D_2.
      \label{eq:v_1-defn}
    \end{equation}
  \end{itemize}
 
  $(p^E(\beta), X^E(\beta))$ is an equilibrium strategy with $\beta_1$ defined
  as above. Further, $\beta_1$ is unique. \qed
\end{theorem}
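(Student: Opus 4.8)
The plan is ``guess and verify'' for the joint routing/bidding equilibrium, followed by a monotonicity argument for uniqueness. First I would fix a candidate threshold $\beta_1$ and read off the quantities it induces: the split $\lambda_2=\lambda F(\beta_1)$, $\lambda_1=\lambda-\lambda_2$, $\rho_1=\lambda_1/\mu_1$, the $W_0$ of \eqref{eq:W_0}, and the conditional type profile $F_1$ of \eqref{eq:F_1}, which is exactly the type distribution seen by the HBF server under the threshold routing \eqref{eq:P_E}. Since the HBF queue then faces a Poisson$(\lambda_1)$ stream with type profile $F_1$, the analyses of \cite{Lui85,Glazer86} apply verbatim: $X^E(\cdot)$ of \eqref{eq:X_E} is the best-response bidding increment above the mandatory minimum $M$, and the per-type sojourn time is $D_1(\beta)=W_1(\beta)+1/\mu_1$ with $W_1(\beta)=W_0/(1-\rho_1+\rho_1 F_1(\beta))^2$ obtained from \eqref{eq:Wbeta}. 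It then remains to check that, with these $X^E,D_1$ and the constant $D_2=W_2(\lambda_2)+1/\mu_2$, the routing \eqref{eq:P_E} satisfies the Wardrop conditions \eqref{eq:wardrop} (specialized to a zero FIFO admission price and HBF payment $M+X^E(\beta)$), and that $\beta_1$ is the only value for which it does.

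The computational core is an envelope identity: differentiating \eqref{eq:X_E} and $W_1$ gives $(X^E)'(\beta)=-\beta\, W_1'(\beta)$, so $\tfrac{d}{d\beta}\,(M+X^E(\beta)+\beta D_1(\beta))=W_1(\beta)+1/\mu_1=D_1(\beta)$. Put $h(\beta):=M+X^E(\beta)+\beta D_1(\beta)-\beta D_2$, the HBF minus FIFO cost of a type-$\beta$ customer under the candidate routing. Then $h'(\beta)=D_1(\beta)-D_2$ is nonincreasing (as $F_1$ is nondecreasing, $D_1$ is nonincreasing), so $h$ is concave; and $F_1\equiv 0$ on $[a,\beta_1]$ forces $X^E\equiv 0$ and $D_1\equiv D_1(\beta_1)$ there, so $h$ is affine on $[a,\beta_1]$. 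In the interior case \eqref{eq:v_1-defn} reads $h(\beta_1)=0$, which (using $X^E(\beta_1)=0$) rearranges to $D_1(\beta_1)-D_2=-M/\beta_1\le 0$. Hence on $[a,\beta_1]$, $h$ is affine with nonpositive slope vanishing at $\beta_1$, so $h\ge 0$ (FIFO weakly preferred, matching $p^E=1$); on $[\beta_1,b]$, $h$ is concave with $h'(\beta_1)\le 0$, hence nonincreasing from $h(\beta_1)=0$, so $h\le 0$ (HBF weakly preferred, matching $p^E=0$); at $\beta_1$, $h=0$, so any split $t$ is admissible. The two boundary cases are parallel: if $\beta_1=a$, the hypothesis $M+aD_1(a)<aD_2$ is $h(a)<0$, whence $h'(a)<-M/a\le 0$ and concavity gives $h<0$ throughout; if $\beta_1=b$, then $\rho_1=0$, $W_0=0$, $X^E\equiv 0$, $D_1\equiv 1/\mu_1$, so $h$ is affine on $[a,b]$, and $h(b)>0$ together with a one-line sign check on the slope $1/\mu_1-D_2$ yields $h>0$ on $[a,b)$.

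For uniqueness I would reduce everything to the scalar indifference function $g(v):=M+v\,(\widetilde D_1(v)-\widetilde D_2(v))$, where $\widetilde D_1(v)=W_0/(1-\rho_1)^2+1/\mu_1$ is the HBF sojourn time of the marginal type $v$ (who bids only the minimum) and $\widetilde D_2(v)=W_2(\lambda F(v))+1/\mu_2$ is the FIFO sojourn time, both under threshold-$v$ routing; thus $g(v)$ is $h(\beta_1)$ evaluated at $\beta_1=v$. The three bullets of the theorem are exactly ``$g(a)<0$'', ``$g(b)>0$'', and ``$g$ has an interior zero'', and $g$ is continuous in $v$ (using continuity of $F$ and of $\lambda_2\mapsto W_2(\lambda_2)$). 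The key structural fact is that $s(v):=\widetilde D_1(v)-\widetilde D_2(v)$ is strictly decreasing: as $v$ grows, $\lambda_1$ and $\rho_1$ fall, so $W_0/(1-\rho_1)^2$ falls, while $\lambda_2$ rises, so $W_2(\lambda_2)$ rises. Writing $g(v)=M+v\,s(v)$: at any zero $v_0$ one has $s(v_0)=-M/v_0\le 0$, so $g'(v_0)=s(v_0)+v_0 s'(v_0)<0$, i.e.\ $g$ is strictly decreasing through each of its zeros and hence has at most one; the same monotonicity shows ``$g(a)<0$'' forces $g<0$ on all of $[a,b]$, precluding ``$g(b)>0$''. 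So the three bullets are consistent with the stated priority ordering and select a single $\beta_1$; conversely, the Wardrop conditions force any threshold supporting an equilibrium to satisfy the corresponding indifference equation \eqref{eq:v_1-defn} or boundary inequality, so $\beta_1$ is pinned down and, by the verification above, does give an equilibrium. I expect this last step --- establishing strict monotonicity of $s(v)$ and ruling out the simultaneous validity of the two boundary regimes --- to be the real obstacle, since it is the only place where the fixed-point coupling between the routing and the induced traffic parameters genuinely bites; the Wardrop verification itself is routine once the identity $(X^E)'=-\beta W_1'$ is in hand.
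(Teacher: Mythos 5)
The paper does not actually prove Theorem~\ref{thm:threshold-equilibrium}: it is recalled verbatim from \cite{Bodas14b} and used as a black box, so there is no in-paper proof to compare against. Judged on its own, your verification is sound and is consistent in spirit with the deviation arguments the paper does carry out (Lemma~\ref{lem:two_thresh} and Theorem~\ref{thm:two-threshold-equilibrium}). Your envelope identity $(X^E)'(\beta)=-\beta W_1'(\beta)$, hence $\frac{d}{d\beta}\bigl(M+X^E(\beta)+\beta D_1(\beta)\bigr)=D_1(\beta)$, is exactly the content of ``Property~1'' of \cite{Bodas14b} that the paper invokes without restating (optimality of the prescribed bid for each type); deriving it explicitly and then exploiting concavity of the cost gap $h$ is a cleaner and more self-contained route than the pointwise marginal-deviation comparisons the paper uses, and it delivers the sign of $h$ on both sides of $\beta_1$ in one stroke. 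Two small caveats: your uniqueness argument needs $s(v)=\widetilde D_1(v)-\widetilde D_2(v)$ to be \emph{strictly} decreasing, which requires $F$ to be strictly increasing on $(a,b)$ (absolute continuity alone allows flat stretches on which $s$ is constant, so uniqueness should then be read as uniqueness of the induced routing split rather than of the number $\beta_1$); and the division by $\beta_1$ in $s(v_0)=-M/v_0$ and $h'(a)<-M/a$ silently assumes $a>0$ (harmless, since for $a=0$ the first bullet's hypothesis reduces to $M<0$ and cannot fire, but worth a sentence). Also note the paper's displayed Wardrop condition literally reads ``$p^E(\beta)\ge 0$ implies \dots'', which is vacuous as written; you correctly read it as $p^E(\beta)>0$, which is the standard formulation and the one the paper's own proofs implicitly use.
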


It is important to point out that the  Wardrop 
condition for the case with a minimum bid $M > 0$ and a free 
FIFO server ($c=0$) that is used in the above theorem is that
\begin{equation*}
  \begin{array}{lll}
    p^E(\beta) \geq 0, & \mbox{implies that} & \beta D_2 \leq M + X(\beta) + 
    \beta D_1(\beta).
    \end{array}
 \end{equation*}

\begin{remark}
  Note from the conditions for $\beta_1$ that when 
$\beta_1 \in (a, b),$ the expected cost experienced by a customer with
$\beta = \beta_1$ at the two servers is same and hence the value of $t$ 
in Eq. \eqref{eq:P_E} is arbitrary. Also, $F_1(\cdot)$ in the theorem 
corresponds to the type profile of those customers that choose to join
the HBF server.
\end{remark}
The above theorem determines the equilibrium routing and bidding pair 
$(p^E(\beta), X^E(\beta))$ where $p^E(\beta)$ is characterized by a unique
threshold $\beta_1.$ It is natural therefore to ask whether a similar 
single threshold routing equilibrium holds when $M=0$ and the FIFO server charges an 
admission price. This question is analyzed in the next section. 

\section{Characterizing the Wardrop Equilibrium}
\label{sec:main}
In this section, we return to our main model and drop the assumption 
of a free FIFO service and minimum bid $M$ made in \cite{Bodas14b}. To 
simplify the analysis and restrict
the number of cases arising due to the heterogeneity of servers, we shall 
henceforth assume that $\mu_1 = \mu_2 = \mu,$ i.e., the servers are 
identical with service rate $\mu$. We shall further assume that $c > 0$ 
and that the minimum bid $M = 0.$ 

A possible candidate for the Wardrop equilibrium routing and bidding policy 
for this case  is the one that was identified in 
Theorem \ref{thm:threshold-equilibrium}. Recall that such a policy 
was characterized by a threshold $\beta_1 \in [a, b]$ and 
customers with $\beta > \beta_1$ choose the HBF server while those with 
$\beta < \beta_1$ choose the FIFO server at equilibrium.
We begin the analysis of this section by  first investigating whether the
routing and bidding policy $p^E(\beta)$ and $X^E(\beta)$ as described
in Eq. \eqref{eq:P_E}--\eqref{eq:X_E} of Theorem \ref{thm:threshold-equilibrium} 
holds for some $\beta_1 \in [a,b]$.  In the following lemma, we will show that 
when $c > 0$ and $M=0,$ $p^E(\beta)$ and $X^E(\beta)$ as described
in Eq. \eqref{eq:P_E}--\eqref{eq:X_E} with $\beta_1 \in (a, b]$ is not possible.
Further if $p^E(\beta)$ and $X^E(\beta)$ satisfy Eq.
\eqref{eq:P_E}--\eqref{eq:X_E}, then the underlying threshold $\beta_1$ 
must satisfy $\beta_1 = a,$ and the corresponding equilibrium strategy for 
all customers is to choose the HBF server.

\begin{lemma}
 \label{lem:two_thresh}
Consider a routing and bidding policy $(p^{E}(\beta), X^{E}(\beta))$ 
as described in Eq. \eqref{eq:P_E}--\eqref{eq:X_E} of Theorem \ref{thm:threshold-equilibrium}
and assume that $c > 0$ and $M = 0.$  Then such a $(p^{E}(\beta), X^{E}(\beta))$ with 
$\beta_1 \in (a, b]$ and satisfying the Wardrop equilibrium 
conditions of Eq. \eqref{eq:wardrop} does not exist.
\end{lemma}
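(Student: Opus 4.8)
The plan is to argue by contradiction: assume a policy $(p^E,X^E)$ of the form \eqref{eq:P_E}--\eqref{eq:X_E} with threshold $\beta_1\in(a,b]$ satisfies the Wardrop conditions, and derive a contradiction. Two structural observations drive everything. First, by \eqref{eq:F_1} we have $F_1\equiv 0$ on $[0,\beta_1)$, and $F_1$ is continuous (inherited from the absolute continuity of $F$), so $dF_1$ carries no mass on $[0,\beta_1]$; hence from \eqref{eq:X_E}, $X^E(\beta)=0$ for all $\beta\le\beta_1$, and in particular $X^E(\beta_1)=0$. Second, for $\beta\le\beta_1$ a (hypothetically deviating) type-$\beta$ customer bids $X^E(\beta)=0$ and therefore sits at the bottom of the HBF priority order, so applying \eqref{eq:Wbeta} to the HBF subsystem (with parameters $\lambda_1$, $\mu$, $F_1$ and $W_0$ of \eqref{eq:W_0}) and using $F_1(\beta)=0$ gives $W_1(\beta)=\mu^2W_0/(\mu-\lambda_1)^2$, independent of $\beta$; thus $D_1(\beta)=\bar D:=\mu^2W_0/(\mu-\lambda_1)^2+1/\mu$ for every $\beta\le\beta_1$, while $D_2$ is of course a constant.

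Next I use the region below the threshold. Pick $\beta_0\in[a,\beta_1)$ with $\beta_0>0$ (possible since $\beta_1>a\ge0$; if one is forced to $\beta_0=0$ the Wardrop inequality below already reads $c\le 0$). Since $p^E(\beta_0)=1>0$, the Wardrop condition \eqref{eq:wardrop} gives $c+\beta_0 D_2\le X^E(\beta_0)+\beta_0 D_1(\beta_0)=\beta_0\bar D$, i.e. $\bar D\ge D_2+c/\beta_0$. Because $\beta_0<\beta_1$ and $c>0$ we get $c/\beta_0>c/\beta_1$, hence $\bar D>D_2+c/\beta_1$, equivalently $X^E(\beta_1)+\beta_1 D_1(\beta_1)=\beta_1\bar D>c+\beta_1 D_2$. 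In words: at the marginal type $\beta_1$ the HBF server is \emph{strictly} more expensive than the FIFO server.

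To close, consider the region just above the threshold. When $\beta_1<b$, every $\beta\in(\beta_1,b]$ has $p^E(\beta)=0$, so all type-$\beta$ customers use the HBF server and the Wardrop conditions force its HBF cost not to exceed its FIFO cost: $X^E(\beta)+\beta D_1(\beta)\le c+\beta D_2$. Letting $\beta\downarrow\beta_1$ and using the continuity of $X^E(\cdot)$ and $D_1(\cdot)$ at $\beta_1$ (again from continuity of $F_1$) yields $X^E(\beta_1)+\beta_1 D_1(\beta_1)\le c+\beta_1 D_2$, contradicting the strict inequality of the previous paragraph. When $\beta_1=b$ there is no type above the threshold, but then $\lambda_1=0$, so $W_0=0$, $X^E\equiv0$ and $\bar D=1/\mu$, whereas $\lambda_2=\lambda>0$ forces $D_2=W_2(\lambda_2)+1/\mu>1/\mu=\bar D$; the Wardrop inequality $c+\beta D_2\le\beta\bar D$ for $\beta\in[a,b)$ then gives $c\le\beta(\bar D-D_2)<0$, again impossible. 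In either case no such policy exists.

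I expect the main obstacle to be the second structural observation, namely that $D_1(\beta)$ is literally constant on $[0,\beta_1]$: one must notice that a zero bid puts a customer at the very bottom of the HBF order and then correctly specialize \eqref{eq:Wbeta} to the induced HBF subsystem. Closely related is the need for continuity of $X^E(\cdot)$ and $D_1(\cdot)$ at $\beta_1$ in order to pass to the limit from the right, which rests on the absolute continuity of $F$ (hence of $F_1$). Finally, the degenerate case $\beta_1=b$ must be peeled off separately, since the "type just above $\beta_1$" argument is vacuous there and has to be replaced by the observation that an empty HBF server makes the FIFO admission price strictly disadvantageous.
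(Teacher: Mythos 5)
Your proposal is correct and follows essentially the same route as the paper: a contradiction built on the two key facts that $X^E\equiv 0$ and $D_1(\cdot)$ is constant on $[a,\beta_1]$, with the $\beta_1=b$ case peeled off separately. The only differences are cosmetic — you locate the contradiction by combining the two one-sided Wardrop inequalities across the threshold via continuity rather than invoking the indifference at $\beta_1$ and exhibiting a profitable deviation for $\beta<\beta_1$, and your treatment of $\beta_1=b$ avoids the paper's implicit M/M/1 formula for $D_2$ — neither of which changes the substance.
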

 \begin{proof}
 We first prove that $\beta_1 \notin (a,b)$ using contradiction.
 Assume that when $c > 0$ and $M = 0,$ the corresponding 
 $(p^{E}(\beta), X^{E}(\beta))$  satisfy Eq. \eqref{eq:P_E}--\eqref{eq:X_E}
 with $\beta_1 \in (a,b).$
 Since $\beta_1 \in (a,b),$ 
 the equilibrium routing function $p^E(\beta)$  requires that 
 $c + \beta_1 D_2 =  \beta_1 D_1(\beta_1).$ This implies that 
 $D_2 < D_1(\beta_1)$ and hence for any $\beta < \beta_1,$ we have 
 
 \begin{eqnarray}
 \label{eq:lem_two_th}
 c & = &  \beta_1 (D_1(\beta_1) - D_2) \nonumber \\
  & > & \beta (D_1(\beta_1) - D_2) \nonumber \\
  & = & \beta (D_1(\beta) - D_2). 
 \end{eqnarray}
 To see how the last equality is true, consider a customer with
 $\beta <  \beta_1$ (that deviates from the prescribed $p^E(\beta)$) and 
 chooses the HBF server instead of the FIFO  server. Note from the 
 definition of $F_1(\cdot)$ and $X^E(\cdot)$ in Eq. \eqref{eq:F_1}
 Eq. \eqref{eq:X_E} respectively that
 $X^E(\beta_1)= 0.$ Using the property from \cite{Glazer86,Bodas14b}
 that $X^E(\beta)$ is increasing in $\beta$ and the fact that
 the $\beta$ customer is infinitesimal, we can conclude that
 $X^E(\beta)= 0.$ The deviating $\beta$ customer therefore does not
 pay a bid and will occupy the end of the queue. The infinitesimal 
 nature of this customer does not affect the delay at any of the two 
 servers and hence for such a deviating customer with $\beta < \beta_1$
 we have  $D_1(\beta) = D_1(\beta_1).$ 
 From Eq. \eqref{eq:lem_two_th}, we have $c + \beta D_2 > \beta D_1 (\beta).$
 Thus for this customer, the cost at the FIFO queue is more than the
 cost it would experience at the HBF server. This customer has an incentive
 to deviate from $p^E(\beta)$ and hence $p^E(\beta)$ is not an equilibrium  
 for any $\beta_1 \in (a,b).$

 Now consider the case when $\beta_1  = b.$ This implies that none of the 
 customers  choose the HBF server and we have $c + \beta D_2 <  \beta 
 D_1(\beta)$ for all $\beta.$ Note that since $\beta_1  = b,$ $\lambda_1 = 0$
 and hence if a customer of type $\beta$ were to choose HBF, the 
 delay would have been $D_1(\beta) = \frac{1}{\mu}.$ At the FIFO server, we have 
 $\lambda_2 = \lambda$ and hence $D_2 = \frac{1}{\mu - \lambda_1}.$ 
 Now $\beta_1 = b$ and hence $c + \beta D_2 <  \beta 
 D_1(\beta)$ implies that $c + \frac{\beta }{\mu - \lambda_1} <  
 \frac{\beta}{\mu}$ for all $\beta.$ However this is not possible for any $\beta$
 as $c > 0$ and $\frac{1}{\mu - \lambda_1} > \frac{1}{\mu}.$
 
 Next, we outline the conditions on the parameters for $\beta_1 = a.$
 From the definition of $p^E(\beta),$ this corresponds to the case 
 when none of the customers choose the FIFO server i.e., when 
 $c + \beta D_2 > X(\beta) + \beta D_1(\beta)$ for all $\beta.$ 
 It is straightforward to see that this case is possible when $c$ is set to any
 arbitrarily large value such that for all $\beta$ we have 
 $c + \beta D_2 > X(\beta) + \beta D_1(\beta).$
This completes the proof.
 \end{proof}

An alternative candidate for the equilibrium routing and bidding policy
is as follows. 

\begin{eqnarray}
\label{eq:candidate_routing}
     \hat{p}(\beta) & = & \begin{cases} 
      1 & \mbox{ for } \beta > \beta_1,\\
      t & \mbox{ for } \beta = \beta_1,\\
      0 & \mbox{ for } \beta < \beta_1.
  \end{cases} 
 \end{eqnarray}
 
 with the threshold $\beta_1 \in (a, b)$ and the 
 corresponding distribution $F_1(\beta)$ satisfying
 \begin{eqnarray}
 \label{eq:candidate_F}
   F_1(\beta) & := & 
    \begin{cases} 
      0 & \beta \ < a\\ 
      \frac{\int_{a}^\beta dF(x) }{\int_a^{\beta_1} dF(x) }
      & a \ \leq \beta \ \leq \beta_1, \\ 1 & \beta > \beta_1. 
    \end{cases} \label{eq:F_1}
 \end{eqnarray}
 
It is easy to see that $\hat{p}(\beta)$ for $\beta_1 = a~ ( \mbox{~resp.~}b)$
is equal to $p^E(\beta)$ (Eq. \eqref{eq:P_E}) with  
$\beta_1 = b~ (\mbox{~resp.~}a)$ and the conditions for
such equilibria with $\beta_1 = a~(\mbox{~or~} b)$ is already
outlined in the previous lemma.
In the following lemma, we shall now show that when $c > 0$
and $M = 0,$ a routing equilibrium satisfying Eq. \eqref{eq:candidate_routing} 
with $\beta_1 \in(a,b)$ is not possible.

 \begin{lemma}
 Let $c > 0, M = 0$ and consider a routing policy that 
 satisfies Eq. \eqref{eq:candidate_routing} with $\beta \in (a, b)$
 and has the type profile of customers to the HBF server given by Eq.
 \eqref{eq:candidate_F}. Such a routing policy does not 
 satisfy the Wardrop equilibrium conditions of Eq. \eqref{eq:wardrop}.
 \end{lemma}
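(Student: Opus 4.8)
The plan is to argue by contradiction along the same lines as Lemma~\ref{lem:two_thresh}, but exploiting the \emph{opposite} monotonicity. Suppose such a routing policy with $\beta_1 \in (a,b)$ were a Wardrop equilibrium. Under $\hat p(\beta)$, it is the \emph{high}-$\beta$ customers who join the FIFO server and the \emph{low}-$\beta$ customers who join the HBF server, so the type profile on the HBF server is $F_1(\cdot)$ from Eq.~\eqref{eq:candidate_F}, supported on $[a,\beta_1]$. Since $0 < \hat p(\beta_1) $ can be taken interior, the Wardrop indifference condition at the threshold gives $c + \beta_1 D_2 = X^E(\beta_1) + \beta_1 D_1(\beta_1)$, where $X^E$ is the equilibrium bidding function computed with respect to $F_1$. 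First I would record what this says about the sojourn times: rearranging, $c = \beta_1\big(D_1(\beta_1) - D_2\big) + X^E(\beta_1)$.

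Next I would examine a customer with $\beta > \beta_1$ who deviates from FIFO to the HBF server. Because this deviating customer has the \emph{highest} sensitivity of anyone, its optimal bid is at least as large as $X^E(\beta_1)$ — in fact, since $X^E$ is increasing and the deviating customer would bid to sit ahead of the entire HBF population, one shows its bid exceeds $X^E(\beta_1)$ (or more carefully, one uses that the bidding function extended past $\beta_1$ is increasing, as in \cite{Glazer86,Bodas14b}). The infinitesimal customer does not change the loads, so $D_1$ as seen by the head-of-queue-bidding customer is bounded below by the current $D_1$ on the HBF server minus a term that vanishes; the clean statement is that $D_1(\beta)$ for the deviating customer, net of its own bid-induced position, still satisfies $\beta D_1(\beta) \ge \beta D_1(\beta_1)$ is the wrong direction — instead I would compare costs directly. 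The cost of deviating is $X_{\mathrm{dev}}(\beta) + \beta D_1(\beta)$ with $X_{\mathrm{dev}}(\beta) \ge X^E(\beta_1)$ and $D_1(\beta) $ the delay of a top-priority arrival, which is no larger than $D_2$ when $\mu_1=\mu_2$ and $\lambda_1 < \lambda_2$ (a high bidder on the less-loaded server waits less than a FIFO customer on the more-loaded one). Hence the deviation cost is at most $X^E(\beta_1) + \beta D_1(\beta) < X^E(\beta_1) + \beta D_2 \le c + \beta D_2$ once $\beta$ is close enough to $\beta_1$, using $c = \beta_1(D_1(\beta_1)-D_2)+X^E(\beta_1)$ and $\beta_1(D_1(\beta_1)-D_2) \le 0$. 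Thus the high-$\beta$ customer strictly prefers the HBF server, contradicting $\hat p(\beta)=1$ there.

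The contradiction hinges on establishing the sign of $D_1(\beta_1) - D_2$ and on controlling the deviating customer's bid. Since under $\hat p$ the HBF server carries the low-sensitivity mass, one expects $\lambda_1 < \lambda$ generically, but one must rule out the degenerate allocation; the Wardrop condition at $\beta_1$ together with $c > 0$ forces $D_1(\beta_1) < D_2$ cannot hold simultaneously with $X^E(\beta_1) \ge 0$ — wait, it can, and that is exactly the regime to analyze. The main obstacle I anticipate is making the bid-comparison rigorous for the deviating $\beta > \beta_1$ customer: one must invoke the structure of the HBF bidding equilibrium (Eq.~\eqref{eq:bid_fx} and the results of \cite{Lui85,Glazer86}) to say that a customer more impatient than every incumbent bids strictly more than $X^E(\beta_1)$ and is served essentially at rate $\mu$, so its sojourn time is close to $1/\mu < D_2$. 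Granting that, the inequality chain above closes, establishing that no such $\beta_1 \in (a,b)$ exists and completing the proof.
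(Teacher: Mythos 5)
Your overall strategy is the paper's: assume the reversed single-threshold policy is an equilibrium, use the indifference condition $c + \beta_1 D_2 = X^E(\beta_1) + \beta_1 D_1(\beta_1)$ at the threshold, and exhibit a profitable deviation to the HBF server by a customer with $\beta > \beta_1$. However, the inequality chain you use to close the argument contains a sign error that breaks the proof. Since the $\beta_1$-type is the top bidder on the HBF server, $D_1(\beta_1) = 1/\mu$, and $\lambda_2 > 0$ gives $D_2 > 1/\mu$, so the indifference condition yields $X^E(\beta_1) - c = \beta_1\left(D_2 - D_1(\beta_1)\right) > 0$, i.e.\ $X^E(\beta_1) > c$. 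Your final step ``$X^E(\beta_1) + \beta D_2 \le c + \beta D_2$'' requires the opposite inequality $X^E(\beta_1) \le c$, which is inconsistent with the relation $\beta_1(D_1(\beta_1)-D_2)\le 0$ you invoke in the same sentence. The term-by-term domination you attempt therefore cannot work: the deviator pays \emph{more} money than the FIFO admission fee, and the profitability of the deviation must come from the delay saving outweighing that extra payment, which is a comparison of aggregate costs.

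The correct closing move (the paper's) is monotonicity in $\beta$ of the delay-saving term: $X^E(\beta_1) - c = \beta_1\left(D_2 - \tfrac{1}{\mu}\right) < \beta\left(D_2 - \tfrac{1}{\mu}\right)$ for every $\beta > \beta_1$, hence $X^E(\beta_1) + \tfrac{\beta}{\mu} < c + \beta D_2$, contradicting the Wardrop condition for those types. Note this holds for \emph{all} $\beta > \beta_1$; your restriction to ``$\beta$ close enough to $\beta_1$'' points the wrong way, since the deviation advantage vanishes as $\beta \downarrow \beta_1$ and grows with $\beta$. Two smaller problems: (i) you lower-bound the deviator's bid by $X^E(\beta_1)$ but then upper-bound its total cost by $X^E(\beta_1) + \beta D_1(\beta)$ --- these bounds point in opposite directions; the clean statement is that the marginal deviator's bid \emph{equals} $X^E(\beta_1)$ because $F_1$ places no mass above $\beta_1$, so the equilibrium bid function is constant there, and this bid already places the deviator at the head of the queue. (ii) Your claim $D_1(\beta) \le D_2$ is conditioned on $\lambda_1 < \lambda_2$, which is not given; the relevant fact is simply that a top-priority infinitesimal arrival incurs no queueing delay behind other waiting customers, while $\lambda_2 > 0$ forces a strictly positive FIFO waiting time.
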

 \begin{proof}
The proof is by contradiction. Suppose $\hat{p}(\beta)$ as in   
Eq. \eqref{eq:candidate_routing} satisfies the conditions of Eq. \eqref{eq:wardrop}.
Since $\beta \in (a, b),$ we have 
\begin{equation*}
 c + \beta_1 D_2 = X(\beta_1) + \beta_1 D_1(\beta_1).
\end{equation*}
Now since $X(\beta)$ is increasing in $\beta$ and all customers with
$\beta < \beta_1$ choose the HBF server, we have $X(\beta_1) > 0$
and $D_1(\beta_1) = \frac{1}{\mu}.$ Further, $\lambda_2 > 0$ implies 
 $D_2  > \frac{1}{\mu} = D_1(\beta_1)$ and hence $X(\beta_1) > c.$ For $\beta > \beta_1,$
we therefore have the following. 
\begin{eqnarray}
X(\beta_1) - c  &=& \beta_1 \left(D_2 - \frac{1}{\mu}\right) \nonumber \\ 
& < & \beta \left(D_2 - \frac{1}{\mu}\right). \nonumber
\end{eqnarray}
Thus for any customer with $\beta > \beta_1$ we have 
\begin{equation}
\label{eq:lem2_thwo_thld}
 X(\beta_1) + \frac{\beta}{\mu} < c + \beta D_2.
\end{equation}
Now consider a customer with $\beta > \beta_1$ that deviates from
$\hat{p}(\beta)$ and chooses the HBF server instead of the FIFO server.
The bid paid by this marginal customer is $X(\beta) = X(\beta_1)$ 
resulting in $D_1(\beta) = \frac{1}{\mu}.$ From Eq. \eqref{eq:lem2_thwo_thld},
the cost at the HBF server for such a customer $(X(\beta_1) + \frac{\beta}{\mu})$
is lower than the corresponding cost $(c + \beta D_2)$ at the FIFO server.
This customer now has an incentive to deviate and the therefore the routing
policy $\hat{p}(\beta)$ of Eq. \eqref{eq:candidate_routing} is not an equilibrium 
policy.
\end{proof}

We now come to the main result of this section. Having discarded
two candidate policies, we shall provide the routing and bidding
policy satisfying the Wardrop equilibrium condition of Eq. \eqref{eq:wardrop}.
As we shall show in the following theorem, this policy is characterized by two thresholds 
$\beta_1$ and $\beta_2$ that satisfy $a < \beta_1 < \beta_2 < b.$ In this 
policy, customers with $\beta$ satisfying $\beta_1 < \beta < \beta_2$ choose 
the FIFO server while the rest choose the HBF server. We will also characterize the 
conditions for such an equilibrium policy in the following theorem.

\begin{theorem}
  \label{thm:two-threshold-equilibrium} 
Let $\beta_1$ and $\beta_2$ be two thresholds satisfying the 
conditions outlined below. Define $p^E(\beta),$ $F_1(\beta),$ 
and $W_0$ as follows.
  \begin{eqnarray}
    p^E(\beta) & = & \begin{cases} 
      1 & \mbox{ for } \beta_1 < \beta < \beta_2,\\
      t & \mbox{ for } \beta \in \{\beta_1, \beta_2\},\\
      0 & \mbox{~elsewhere~}.
    \end{cases} 
    \label{eq:P_E_tt}
    \\
    F_1(\beta) & := & 
    \begin{cases} 
      \frac{\int_{a}^\beta dF(x)}{\left(1 - \int_{\beta_1}^{\beta_2} dF(x)\right)} & \beta \ < \beta_1\\ 
      \frac{\int_{a}^{\beta_1}dF(x)}{\left(1 - \int_{\beta_1}^{\beta_2} dF(x)\right)}
      & \beta_1 \ \leq \beta \ \leq \beta_2, \\
  \frac{\int_{a}^{\beta_1}dF(x) + \int_{\beta_2}^{\beta}dF(x)}{\left(1 - \int_{\beta_1}^{\beta_2} dF(x)\right)}
      & \beta_2 \ < \beta \ \leq b,\\ 
 1 & \beta > b, 
    \end{cases} \label{eq:F_1_tt}
    \\
    W_0 & = & \frac{\lambda_1}{2} \int_0^\infty \tau^2 dG(\mu \tau), 
    \label{eq:W_0} \\
    X^E(\beta) & = & \int_0^\beta \frac{2 \rho_1 W_0 y}{\left( 1 - \rho_1 +
        \rho_1 F_1(y)\right)^3}\ dF_1(y)
    \label{eq:X_E_tt}
  \end{eqnarray}
  
  The conditions that need to be satisfied by $\beta_1$ and 
 $\beta_2$ are as follows.
\begin{enumerate}
\item $a < \beta_1 < \beta_2 < b.$
 \item $X^E(\beta_1) = c.$
\item $X^E(\beta_1) + \beta_1 D_1(\beta_1) = c+ \beta_1 D_2(\lambda_2)$
 where $\lambda_2 = \int_{\beta_1}^{\beta_2}\lambda dF(x).$
\end{enumerate}

Then, $(p^E(\beta), X^E(\beta))$ is an equilibrium strategy satisfying the 
Wardrop conditions of Eq. \eqref{eq:wardrop}.
\end{theorem}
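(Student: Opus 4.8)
The plan is to take the bids $X^{E}(\cdot)$ of Eq.~\eqref{eq:X_E_tt} as given --- they are exactly the Lui / Glazer--Hassin equilibrium bids Eq.~\eqref{eq:bid_fx} for the HBF subpopulation with rate $\lambda_{1}$ and type profile $F_{1}$ of Eq.~\eqref{eq:F_1_tt}, so by \cite{Lui85,Glazer86} no customer routed to the HBF server gains by re-bidding --- and to verify only the routing conditions Eq.~\eqref{eq:wardrop}. I would introduce the cost gap $\Delta(\beta) := X^{E}(\beta) + \beta D_{1}(\beta) - \bigl(c + \beta D_{2}(\lambda_{2})\bigr)$, the HBF-minus-FIFO cost of a type-$\beta$ customer, where $D_{1}(\beta) = W_{1}(\beta) + 1/\mu$ with $W_{1}(\beta) = W_{0}/u(\beta)^{2}$ and $u(\beta) := 1-\rho_{1}+\rho_{1}F_{1}(\beta)$, by Eq.~\eqref{eq:Wbeta} applied to the HBF subsystem, and $D_{2}(\lambda_{2})$ depends on $\beta_{1},\beta_{2}$ only through $\lambda_{2}=\lambda\int_{\beta_{1}}^{\beta_{2}}dF$. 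What must be shown is $\Delta(\beta)\le 0$ where $p^{E}(\beta)=0$, $\Delta(\beta)\ge 0$ where $p^{E}(\beta)=1$, and $\Delta(\beta_{1})=\Delta(\beta_{2})=0$ so that the value $t$ at the thresholds is irrelevant. First I would note that condition~2 together with $c>0$ forces $\lambda_{1}>0$ and $\int_{a}^{\beta_{1}}dF>0$ (else $X^{E}\equiv 0$), so $W_{0}>0$ and $F_{1}$ is a genuine distribution increasing from $0$ to $1$.

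\textbf{The middle interval.} By Eq.~\eqref{eq:F_1_tt}, $F_{1}$ is constant on $[\beta_{1},\beta_{2}]$, so $dF_{1}\equiv 0$ there; hence $X^{E}(\beta)=X^{E}(\beta_{1})=c$ (condition~2) and $u(\beta)=u(\beta_{1})$, so $D_{1}(\beta)=D_{1}(\beta_{1})$, for all such $\beta$. Subtracting condition~2 from condition~3 gives $\beta_{1}D_{1}(\beta_{1})=\beta_{1}D_{2}(\lambda_{2})$, and since $\beta_{1}>a\ge 0$ we get $D_{1}(\beta_{1})=D_{2}(\lambda_{2})$, hence also $D_{1}(\beta_{2})=D_{2}(\lambda_{2})$. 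Therefore $\Delta(\beta)=\beta\bigl(D_{1}(\beta_{1})-D_{2}(\lambda_{2})\bigr)=0$ on $[\beta_{1},\beta_{2}]$: the two servers cost the same for every ``middle'' customer, so Eq.~\eqref{eq:wardrop} holds there with $p^{E}=1$ and for the arbitrary value at $\beta_{1},\beta_{2}$.

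\textbf{The two tails.} The crux is an envelope identity on $(a,\beta_{1})$ and on $(\beta_{2},b)$, where $F_{1}$ obeys Eq.~\eqref{eq:F_1_tt} and is absolutely continuous. Differentiating Eq.~\eqref{eq:X_E_tt} gives $\tfrac{d}{d\beta}X^{E}(\beta)=\tfrac{2\rho_{1}W_{0}\beta}{u(\beta)^{3}}F_{1}'(\beta)$; since $u'=\rho_{1}F_{1}'$ and $W_{1}=W_{0}u^{-2}$, one gets $\tfrac{d}{d\beta}\bigl(\beta W_{1}(\beta)\bigr)=W_{1}(\beta)-\tfrac{2\rho_{1}W_{0}\beta}{u(\beta)^{3}}F_{1}'(\beta)$, so the $F_{1}'$ terms cancel in the sum and $\tfrac{d}{d\beta}\bigl(X^{E}(\beta)+\beta D_{1}(\beta)\bigr)=W_{1}(\beta)+1/\mu=D_{1}(\beta)$. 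Hence $\Delta'(\beta)=D_{1}(\beta)-D_{2}(\lambda_{2})$ on each tail. Since $F_{1}$ is non-decreasing and $D_{1}(\beta)=W_{0}/u(\beta)^{2}+1/\mu$ is decreasing in $F_{1}(\beta)$, the map $\beta\mapsto D_{1}(\beta)$ is non-increasing; together with $D_{1}(\beta_{1})=D_{1}(\beta_{2})=D_{2}(\lambda_{2})$ this gives $D_{1}(\beta)\ge D_{2}(\lambda_{2})$ for $\beta\le\beta_{1}$ and $D_{1}(\beta)\le D_{2}(\lambda_{2})$ for $\beta\ge\beta_{2}$, i.e.\ $\Delta'\ge 0$ on $(a,\beta_{1})$ and $\Delta'\le 0$ on $(\beta_{2},b)$.

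\textbf{Finishing, and the hard part.} As $\Delta$ is continuous with $\Delta(\beta_{1})=\Delta(\beta_{2})=0$, being non-decreasing on $[a,\beta_{1}]$ forces $\Delta\le 0$ there and being non-increasing on $[\beta_{2},b]$ forces $\Delta\le 0$ there; so the HBF server is weakly cheaper at every $\beta$ with $p^{E}(\beta)=0$, and with the middle-interval step this verifies Eq.~\eqref{eq:wardrop} for all $\beta$. I would append a short continuity / intermediate-value argument over the region $a<\beta_{1}<\beta_{2}<b$ to show that a pair satisfying conditions~1--3 actually exists, and observe that when $F$ has a positive density $\Delta'>0$ on $(a,\beta_{1})$ and $\Delta'<0$ on $(\beta_{2},b)$, so the routing is strictly optimal off the thresholds. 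The main obstacle is the envelope identity $\tfrac{d}{d\beta}\bigl(X^{E}(\beta)+\beta W_{1}(\beta)\bigr)=W_{1}(\beta)$: it is what collapses $\Delta'$ to the transparent form $D_{1}(\beta)-D_{2}(\lambda_{2})$, and it must be applied piecewise only --- across $\beta_{1},\beta_{2}$ the profile $F_{1}$ is flat and $X^{E}$ has corners, so the identity is neither needed nor valid there.
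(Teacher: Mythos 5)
Your proof is correct, and on the decisive step it takes a genuinely different route from the paper. The middle-interval argument coincides with the paper's: both deduce $D_1(\beta_1)=D_2(\lambda_2)$ from conditions 2 and 3, note that $F_1$ is flat on $[\beta_1,\beta_2]$ so the bid and delay are constant there, and conclude indifference. For the two tails, however, the paper argues via incentive compatibility of the bids cited as a black box from \cite{Bodas14b,Glazer86}: since $X^E(\beta)$ minimizes $x+\beta\,(\text{delay at bid }x)$ over all admissible bids, and bidding $X^E(\beta_1)$ with resulting delay $D_1(\beta_1)$ is one available option, $X^E(\beta)+\beta D_1(\beta)\leq X^E(\beta_1)+\beta D_1(\beta_1)=c+\beta D_2(\lambda_2)$ follows in one line without any monotonicity of $D_1$. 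You instead prove the infinitesimal version of that same optimality from scratch --- the envelope identity $\tfrac{d}{d\beta}\bigl(X^E(\beta)+\beta W_1(\beta)\bigr)=W_1(\beta)$, which I have checked against \eqref{eq:X_E_tt} and \eqref{eq:Wbeta} and which is valid a.e.\ on each tail by absolute continuity of $F$ --- and then integrate $\Delta'(\beta)=D_1(\beta)-D_2(\lambda_2)$ using the monotonicity of $D_1$ in $\beta$ and the boundary values $\Delta(\beta_1)=\Delta(\beta_2)=0$. What your version buys is self-containment (no appeal to Property~1 of \cite{Bodas14b}) and an explicit picture of where the sign of the cost gap comes from, including strictness off the thresholds when $F$ has a density; what it costs is the extra monotonicity lemma and the piecewise caveat at the corners, neither of which the paper's one-line deviation argument needs. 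Your remark that existence of $(\beta_1,\beta_2)$ would require a separate intermediate-value argument is apt but outside the statement, which (like the paper) simply assumes a pair satisfying conditions 1--3 is given.
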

\begin{proof}
 As in the case of Theorem \ref{thm:threshold-equilibrium}, we will prove
 that $p^E(\beta)$ and $X^E(\beta)$ as described in \eqref{eq:P_E_tt}--\eqref{eq:X_E_tt}
 satisfy the Wardrop condition.  Note that with 
 $p^{E}(\beta)$ as in \eqref{eq:P_E_tt}, the arrival rate to the FIFO server is
 $\lambda_2 = \lambda \int_{\beta_1}^{\beta_2} dF(\tau).$ The arrival rate to the 
 HBF server is $\lambda - \lambda_2$ and the type profile of customers choosing
 this server will be as in Eq. \eqref{eq:F_1_tt}. As in Theorem 
 \ref{thm:threshold-equilibrium},  customers that join the HBF server will use the
 equilibrium bidding policy of Eq. \eqref{eq:X_E_tt}. We will now verify that 
 $p^E(\beta)$ of \eqref{eq:P_E_tt} satisfies the  corresponding Wardrop
 condition. 

 First note that second and third conditions of the theorem imply that 
 $D_1(\beta_1) = D_2(\lambda_2).$ Further, since  $F_1(\beta)$ is a constant
for $\beta \in (\beta_1, \beta_2),$ we have $X(\beta_1) = X(\beta_2)$
and $D_1(\beta_1) = D_1(\beta_2).$
Now consider a customer with $\beta < \beta_1$ that chooses the HBF server
for the routing policy of Eq. \eqref{eq:P_E_tt}. For this customer, we know
from (3) in Property~1 of \cite{Bodas14b} that $X(\beta)$ is the optimal bid
minimizing its individual cost i.e.,   
\begin{eqnarray*}
X(\beta) + \beta D_1(\beta) < X(\beta_1) + \beta D_1(\beta_1).
\end{eqnarray*}
 This implies that 
\begin{eqnarray*}
X(\beta) + \beta D_1(\beta) < c + \beta D_2(\lambda_2)
\end{eqnarray*}
and hence any customer with $\beta < \beta_1$ has no incentive to choose
the FIFO server. Similarly, for a customer with $\beta > \beta_2$ we 
have    
\begin{eqnarray*}
X(\beta) + \beta D_1(\beta) & < & X(\beta_2) + \beta D_1(\beta_2) \\
& = & c + \beta D_2(\lambda_2)
\end{eqnarray*}
and clearly customers with $\beta > \beta_2$ have no incentive to choose
the FIFO server.

Now consider any customer with $\beta \in (\beta_1, \beta_2).$  
If such a customer were to choose the HBF server instead of the 
prescribed FIFO server, its cost at the HBF server will be  
$X(\beta_1) + \beta D_1(\beta_1).$ This follows from (1) and (3) of 
Property~1 in \cite{Bodas14b} and the fact that a deviation 
by a marginal customer does not change Eq. \eqref{eq:P_E_tt} and 
\eqref{eq:F_1_tt} due to the absolute continuity of $F.$
Now $X(\beta_1) + \beta D_1(\beta_1) = c + \beta D_2(\lambda_2),$
i.e., the cost at the two server is the same and there is no 
incentive for this customer to deviate from $p^E(\beta)$ given 
by Eq. \eqref{eq:P_E_tt}. 
This completes the proof.
\end{proof}
 
 \begin{remark}
In the above discussion, we have assumed $c > 0$ and $M=0.$
W.l.o.g  the observations are true even for the case when $c > M$
and $M \neq 0.$ For this case, let $\hat{c} = c- M.$ The above 
theorem and the previous lemmas will hold with $c$ being replaced
by $\hat{c}.$ It is easy to see that the case $c < M$ and 
$M \neq 0$ corresponds to the Wardrop equilibrium as described in
Theorem \ref{thm:threshold-equilibrium}.
 \end{remark}

We now provide an example to verify the claim in the above theorem.\\
\textbf{Example:} Consider an HBF and FIFO server with identical 
service rates $\mu = 5.$ The arrival rate for the customers is 
$\lambda = 4.$ $F(\cdot)$ is a uniform distribution with support $[a, b]$
 where $a = 0$ and $b = 10.$ Now suppose that the admission price
at the FIFO server is $c = 0.2017.$ Assume that the service 
requirement of all customers is identical with an exponential 
distribution of unit mean. For a system with the specified parameters,
the interest is to characterize the Wardrop equilibrium routing function. 
It can be numerically verified that corresponding $p^{E}(\beta)$
is characterized by the two thresholds $\beta_1 = 1.67$ and $\beta_2 = 5.66.$
With these thresholds, the corresponding arrival rates to the two servers 
are $\lambda_1 = 1.596$ and $\lambda_2 = 3.404.$ Due to the exponential 
service requirement, it is easy to see that $D_2 = \frac{1}{\mu - \lambda_1} = 0.2938.$ 
From the definition of $W_0$ and Eq. \eqref{eq:Wbeta}, the expected waiting 
time of customers in HBF queue expressed as a function of $\beta$ is 
$$
W_1(\beta) = \frac{1}{(\mu - \lambda(1-F_1(\beta)))^2}. 
$$
It can be verified that the value of $D_2(\beta_1) = W_1(\beta_1) + 
\frac{1}{\mu} = 0.2938.$ Further, evaluating $X^E(\beta_1)$ using Eq. \eqref{eq:X_E_tt}
gives us $X^E(\beta_1) = c = 0.2017.$ Clearly, all the three conditions
are satisfied by the choice of $\beta_1$ and $\beta_2$ validating the 
theorem.


\section{Discussion}
 
The present work characterizes a two threshold equilibrium
when the service system has parallel FIFO and HBF servers. 
It should be noted that our analysis holds not  
only in the case of the FIFO scheduling policy but also for policies
such as processor sharing or last-in, first-out (LCFS) or 
any other policy that does not differentiate between its customers
on the basis of their $\beta.$ 
As in \cite{Bodas14b}, the  application for this model can be 
in the form of a new pricing and auction mechanism in on-demand 
resource provisioning, e.g., cloud computing systems.

Some interesting extensions present themselves. If the two 
servers belong to the same system operator, it would be interesting to
know the revenue maximizing admission price $c$ at the 
FIFO server. On the other hand, if the FIFO server is perceived as
a competing system, then the revenue optimal $c$ obtained 
in this case would certainly be different.

\bibliographystyle{IEEEbib}
\bibliography{bodas}
 
\end{document}